\newlength\figureheight
\newlength\figurewidth
\theoremstyle{plain}
\newtheorem{theorem}{Theorem}
\newtheorem{lemma}{Lemma}
\newtheorem{corollary}{Corollary}
\theoremstyle{definition}
\newtheorem{definition}{Definition}
\newtheorem{assumption}{Assumption}
\theoremstyle{remark}
\newtheorem{remark}{Remark}
\newcommand{\set}[1]{\mathcal{#1}}
\newcommand{\dist}{\mathcal{Q}} 
\newcommand{\setst}[2]{\left\{ #1 \, \middle| \, #2 \right\}}
\newcommand{\tp}{{\mkern-1.5mu\mathsf{T}}}
\DeclareMathOperator{\E}{\mathbb{E}}
\DeclareMathOperator{\tr}{\text{\normalfont tr}}
\DeclareMathOperator{\var}{\text{\normalfont var}}
\DeclareMathOperator{\diag}{\text{\normalfont diag}}
\title{\LARGE \bf
Recursively Feasible Stochastic Model Predictive Control using Indirect Feedback
}
\author{Lukas Hewing, Kim P. Wabersich, Melanie N. Zeilinger
\thanks{This work was supported by the Swiss National Science Foundation under grant no. PP00P2 157601 / 1.}
\thanks{All authors are with the Institute for Dynamic Systems and Control, ETH Zurich.
        {\tt\footnotesize [lhewing|wkim|mzeilinger]@ethz.ch}}%
}
\begin{document}

\maketitle
\thispagestyle{empty}
\pagestyle{empty}

\begin{abstract}
We present a stochastic model predictive control (MPC) method for linear
discrete-time systems subject to possibly unbounded and correlated additive
stochastic disturbance sequences. Chance constraints are treated in analogy to
robust MPC using the concept of probabilistic reachable sets for constraint
tightening. We introduce an initialization of each MPC iteration which is
always recursively feasibility and thereby allows that chance constraint
satisfaction for the closed-loop system can readily be shown. Under an i.i.d.
zero mean assumption on the additive disturbance, we furthermore provide an
average asymptotic performance bound. Two examples illustrate the approach,
highlighting feedback properties of the novel initialization scheme, as well as
the inclusion of time-varying, correlated disturbances in a building control
setting.
\end{abstract}
%
%
\section{Introduction}\label{sec:introduction}
%
Most real world control applications are subject to uncertainty and
external disturbances, which can severely deteriorate both performance and
safety of the system. In model predictive control (MPC) this problem can be
addressed by explicitly assessing \emph{worst-case} disturbances, leading to
robust MPC approaches~\cite{Langson2004}. 
These approaches, however, can be overly conservative, for example in cases
where occasional constraint violations are permissible. Using additional
information about the disturbances in form of distributions, stochastic MPC
offers advantages through a less conservative treatment of constraints, as well
as by improving average performance e.g.\ by optimizing the expected
cost~\cite{Mesbah2016}.

Stochastic MPC approaches can typically be divided into two classes. While
\emph{randomized} methods rely on the generation of suitable disturbance
realizations or scenarios, \emph{analytic approximation} methods reformulate the
problem into a deterministic one~\cite{Farina2016}. In this paper, we consider 
an analytic approximation for linear discrete-time systems subject to
additive disturbances which are potentially correlated in time and have
unbounded support. 

Introducing feedback from state measurements into the MPC in a stochastic
setting leads to the question of recursive feasibility of the underlying
optimization problem. Typically, this issue is addressed in one of two
ways~\cite{Kouvaritakis2016}. Either it is ensured using a \emph{robust}
constraint tightening, which is generally restricted to the case of bounded
support disturbance distributions, see e.g.~\cite{Cannon2011, Korda2014,
Lorenzen2017}, or the original MPC problem is allowed to become infeasible and a
suitable recovery mechanism is employed, e.g. in~\cite{Farina2013,
Farina2015,Paulson2017}. These recovery mechanisms usually come with a loss of
strict guarantees on the \emph{closed-loop} chance constraint
satisfaction~\cite{Kouvaritakis2016}. In \cite{Hewing2018b}, strict guarantees
were recovered under a unimodality assumption on the additive noise
distribution. Additionally, a recent approach guarantees recursive feasibility
also with unbounded disturbances for the case of suitably discounted violation
probabilities~\cite{Yan2018}. 

The approach presented in this paper offers strong guarantees w.r.t.\
closed-loop chance constraint satisfaction and guarantees
recursive feasibility, even for unbounded disturbance distributions, while
incorporating feedback from the measured state in the MPC problem. To the best
of our knowledge, these properties have not been established in previous
approaches. The results are achieved by introducing feedback of the currently
measured state $x(k)$ through the cost function only, while tightened
constraints are satisfied w.r.t.\ a nominal system state $z(k)$, ensuring
feasibility. This results in the fact that the system error $e(k) = x(k) - z(k)$
evolves linearly in closed-loop, facilitating straightforward analysis of
performance and chance constraint satisfaction. 
Related concepts and their implications for stochastic MPC have recently been discussed in~\cite{Mayne2018}.
We demonstrate that this form of feedback has strong influence also on the
nominal state trajectory $z(k)$ and results in closed-loop performance
comparable to previous stochastic MPC methods---while requiring significantly
fewer assumptions on the disturbance distribution for chance constraint
satisfaction. For constraint tightening, we employ techniques related to
tube-based MPC~\cite{Rawlings2009}, making use of the concept of probabilistic
reachable sets (PRS)~\cite{Hewing2018b,Pola2006,Abate2008}. These
PRS take a similar role as robust invariant sets in robust MPC such that they
contain the error dynamics $e(k)$ with a specified probability. In this paper,
we extend this concept to non-zero mean disturbance sequences correlated in
time, enabling the treatment of a broad class of problems. We finally
demonstrate the flexibility resulting from these properties in a building
control task, for which the disturbance sequence is non-i.i.d., non-zero mean,
and strongly correlated in time. 
%
\section{Preliminaries}\label{sec:preliminiaries}
%
\subsection{Notation}\label{subsec:Notation}
%
We refer to quantities of the system realized in closed-loop at time $k$
using parentheses, e.g.\ $x(k)$ is the state measured at time step $k$,
while quantities used in the MPC prediction are indexed with subscript,
e.g.\ $x_i$ is the system state predicted $i$ time steps ahead.
In order to specify the time at which the prediction is made, we use $x_i(k)$.
The weighted 2-norm is $\Vert x \Vert_P = \sqrt{x^\tp P x}$,
and $P \succ 0$ ($P \succeq 0$) refers to a positive (semi-)definite matrix.
The notation $ \set{A} \ominus \set{B} =
\setst{ a \in \set{A}}{a+b \in \set{A} \ \forall b \in \set{B}}$
refers to the Pontryagin set difference.
The distribution $\dist$ of a random variable $x$ is specified as $x \sim
\dist$. The probability density of $x$ is denoted $p(x)$, conditioned on another
random variable $y$ it is $p(x|y)$. Similarly, probabilities and conditional 
probabilities are denoted $\Pr(A)$, $\Pr(A \, | \, B)$ and the expected value 
and variance of $x$ w.r.t. a random variable $w$ are $\E_w(x)$ and $\var_w(x)$, 
respectively. Two random variables $x$, $y$ that share the same distribution are
equal in distribution, denoted $x \overset{d}{=} y$.
%
\subsection{Considered System}\label{subsec:ConsideredSystem}
%
We consider a linear time-invariant (LTI) system
under additive disturbances
\begin{align} \label{eq:system}
x(k\!+\!1) &= A x(k) + Bu(k) + w(k) 
\end{align}
with state $x(k) \in \mathbb{R}^{n_x}$, inputs $u(k) \in \mathbb{R}^{n_u}$ and
randomly distributed disturbance realizations $w(k)$ taking values in $\mathbb{R}^{n_x}$. 
The system is subject to chance constraints on states and inputs
\begin{subequations}\label{eq:chanceConstraints}
  \begin{align}
    &\Pr\!\left(x(k) \in  \set{X} \, | \, x(0)\right) \geq p_x \, , \label{eq:chanceConstraints_state}\\
    &\Pr\!\left(u(k) \in  \set{U} \, | \, x(0)\right) \geq p_u \, , \label{eq:chanceConstraints_input}
  \end{align}
\end{subequations}
where $\set{X}$ and $\set{U}$ are convex sets. The
probabilities are to be understood w.r.t.\ knowledge at time step 0,
i.e.\ conditioned on the given initial state. Hard constraints, e.g.\ on the inputs,
can be included in the formulation by imposing a probability of~1.
In general, however, these can only be satisfied
for disturbance distributions of bounded support. 

We consider control problems of arbitrarily large, but finite\footnote{We choose
a finite control horizon mainly to avoid technicalities. Most properties are
easily carried over to an infinite-horizon control problem, e.g. by considering
the limit of $\overline{N}\rightarrow \infty$ \cite{Bertsekas2012}.}, horizon
$\bar{N}$ and a disturbance sequence $W = {[w(0)^\tp, \ldots,
w(\bar{N})^\tp]}^\tp$ distributed according to distribution $W \sim  \dist^W$,
of which at least the first two moments are known. Note that the individual
disturbances $w(k)$ are therefore not necessarily independent, identically
distributed (i.i.d.) or zero mean. Using a cost function $l_k(x(k),u(k))$ the
resulting stochastic (finite-horizon) optimal control problem can be stated as
\begin{mini!}[4]
	{\{\pi_k\}}{ \E_W \left( \sum_{k=0}^{\bar{N}} l_k(x(k), u(k)) \right)}
	{\label{eq:stochOptimalControl}}{}
  \addConstraint{x(k\!+\!1)}{= A x(k) + B u(k) + w(k)\label{eq:origPredictiveDynamics}}
  \addConstraint{u(k)}{ = \pi_k(x(0),w(0),\ldots,w(k))}
  \addConstraint{W}{= [w(0)^\tp, \ldots, w(\bar{N})^\tp]^\tp \sim \dist^W}
  \addConstraint{\Pr(x(k)}{\in \set{X} \,|\, x(0))\geq p_x }
  \addConstraint{\Pr(u(k)}{\in \set{U} \,|\, x(0)) \geq p_u \, ,}
\end{mini!}
for all $k = 0,\ldots,\bar{N}$, in which $\{\pi_k\}$ is a sequence of control
laws using information up to time step $k$. 

This paper presents a feasible approximate solution to control
problem~\eqref{eq:stochOptimalControl} based on receding horizon or model
predictive control over a shortened horizon $N \ll \bar{N}$. We will show that
the receding horizon controller $\pi^{MPC}$ derived in the following sections
satisfies all constraints of optimization
problem~\eqref{eq:stochOptimalControl}, in particular, it satisfies
\emph{closed-loop} chance constraints~\eqref{eq:chanceConstraints}.

In the following section, we recall \emph{probabilistic reachable sets} (PRS)
and extend the concept to non-i.i.d. disturbance sequences. These PRS form the
basis for constraint tightening used in the stochastic model predictive control
approach presented in Section~\ref{sec:MPC}.
%
\section{Probabilistic Reachable Sets}\label{sec:ReachableSets}
%
The concept of PRS for stochastic MPC was used
in~\cite{Hewing2018b} and is related to probabilistic set
invariance~\cite{Kofman2012, Pola2006, Hewing2018}. In the following, we recall
definitions of PRS and consider the extension to non-i.i.d., i.e.\ time-varying
and correlated disturbance sequences. Note that there exists rich literature on
related problems, in particular for stochastic reachability of hybrid
system~\cite{Abate2008} and stochastic reach-avoid problems~\cite{Touzi2012,Esfahani2016}.
%
\subsection{Definitions}\label{subsec:reachSetDefinition}
%
For the following definitions, consider a stochastic process 
\begin{align}\label{eq:stochProcess}
  \{ e(k) \}_{k \in 0,\ldots,\bar{N}} \, ,
\end{align} 
where $e(k)$ takes values in $\mathbb{R}^{n_x}$.
\begin{definition}[Probabilistic $n$-step reachable set]
A set $\set{R}_n$ with $0 \leq n \leq \bar{N} $ is an $n$-step probabilistic reachable set ($n$-step PRS) of probability level $p$
for process~\eqref{eq:stochProcess} initialized at $e(0)$ if
\[
	\Pr(e(n) \in \set{R}_n \,|\, e(0)) \geq p\, .
\]
\end{definition}
\begin{definition}[Probabilistic reachable set\label{def:PRS}]
	A set $\set{R}$ is a probabilistic reachable set (PRS)
  of probability level $p$ for for process~\eqref{eq:stochProcess} initialized
  at $e(0)$ if
	\[
	\Pr(e(n) \in \set{R}\,|\,e(0)) \geq p\ \ \forall\,  0 \leq n \leq \bar{N}. \,
	\]
\end{definition}
Note that this probability bound needs to hold at all time steps
individually, as opposed to holding all time steps jointly, which would define
much more restrictive sets. 
It is implicit in the definition of PRS that a PRS is an $n$-step PRS
for all $0\leq n \leq \bar{N}$. Conversely, it follows that a set $\set{R}$ is 
a PRS of level $p$ if it satisfies 
\begin{equation}
  \set{R} \supseteq \bigcup\limits_{n=0}^{\bar{N}} \set{R}_n \, . \label{eq:Runion}
\end{equation}
%
\subsection{PRS for State}
Consider now the case in which the
process~\eqref{eq:stochProcess} is defined through linear dynamics
under disturbance sequence~$W = [w(0)^\tp ,\ldots, w(\overline{N})^\tp] \sim \dist$, i.e.
\begin{align}\label{eq:defSystem}
 e(k\!+\!1) = A_K e(k) + w(k) \, .
\end{align}
In~\cite{Hewing2018b}, it is shown that under the assumption of unimodal i.i.d.
disturbances $w(k)$, a convex $n$-step PRS is also an $i$-step PRS for all $i
\leq n$. In this case, the relation~\eqref{eq:Runion} simplifies to
\begin{equation}\label{eq:Runion_iid}
  \set{R} \supseteq \set{R}_{\bar{N}} \, .
\end{equation}
In many applications of PRS, e.g.\ when using PRS for constraint tightening, it
is desirable to find sets that are small in a suitable sense. In the following,
we present a variance-based computational method for processes of
form~\eqref{eq:defSystem}, which in particular for the special case of Gaussian
disturbance sequences provides tight sets, in the sense that any down-scaling of
the sets would violate the probability guarantees. 

\subsection{Variance-based PRS Computation}\label{subsec:PRSVarianceComputation}
%
We present a straightforward approach for PRS computation based on mean-variance
information of $E = {[e(1)^\tp, \ldots, e(\bar{N})^\tp]}^\tp$ under a 
correlated disturbance sequence $W = {[w(0)^\tp, \ldots, w(\bar{N})^\tp]}^\tp$ 
with $\E(W) = \mu^W$, $\var(W) = \Sigma^W$.
Due to the linear dynamics~\eqref{eq:defSystem} we have for the sequence
that 
\[ 
\E(E) = A_0 e(0) + \bar{A}\mu^W \, , \var(E) = \bar{A}\Sigma^W \bar{A}^\tp \, ,
\]
with 
\[
A_0 = \begin{bmatrix} A_K \\ A_K^2 \\ \vdots \\ A_K^{\bar{N}-1} \end{bmatrix}, 
\, \bar{A} = \begin{bmatrix} I \\ A_K & I \\ \vdots & & \ddots \\ 
  A_K^{\bar{N}-2} & A_K^{\bar{N}-3} & \ldots & I \end{bmatrix} \, .
\]
Using the multivariate Chebyshev inequality we find that
\begin{equation}\label{eq:chebychevPRS}
  \set{R}_n^c \! := \! \setst{e\!}{\!(e - \E(e(n)))^\tp \var{(e(n))}^{-1} (e - \E(e(n))) < \tilde{p} }
\end{equation}
is an $n$-step PRS of probability level $p = 1- n_x/\tilde{p}$. The marginal
expectation and variance $\E(e(n))$, $\var(e(n))$ are directly available from
$\E(E)$ and $\var(E)$. 

Since in the case of correlated disturbance sequences the i.i.d. assumption does
not hold, it is not possible to construct PRS along Definition~\ref{def:PRS}
using~\eqref{eq:Runion_iid}. Instead, one way to construct an ellipsoidal PRS
$\set{R}^c$ is to find a minimum size ellipsoid which contains all ellipses
$\set{R}_n^c$, thereby satisfying~\eqref{eq:Runion}. This can be formulated as a
semidefinite program, see e.g.~\cite{Boyd2004}.

\begin{remark}[Zero mean i.i.d.\ disturbances]\label{rm:Riid} When $W$ is a zero mean i.i.d.\ disturbance sequence with $\var(w(k)) = \Sigma^w$,
a PRS $\set{R}^c$ can be 
  constructed by solving the Lyapunov equation 
  $\Sigma_\infty = A_K \Sigma_\infty A_K ^\tp + \Sigma^w$ as $
\set{R}^c := \setst{x}{x^\tp \Sigma_\infty^{-1} x \leq \tilde{p}}$ for 
arbitrary horizons $\bar{N}$.
\end{remark}

\begin{remark}[Gaussian distributions]\label{rm:RGauss}
  If $W$ is normally distributed, $\tilde{p}$ can be set to $\tilde{p} =
  \chi^2_{n_x}(p)$,  where $\chi^2_{n_x}(p)$ is the quantile function of the
  chi-squared distribution with $n_x$ degrees of freedom, resulting in
  significantly smaller sets.
\end{remark}

\subsection{PRS for Input}
The methods for PRS computations outlined in the previous section are suitable
for bounding a state error evolving along~\eqref{eq:defSystem}, and are
ultimately used to guarantee chance constraint satisfaction on the
states~\eqref{eq:chanceConstraints_state}. 
For the satisfaction of input chance
constraints~\eqref{eq:chanceConstraints_input} we need to similarly bound
\emph{input errors}, which we assume to be given by a feedback policy on the
state errors $e_u(k) = \pi(e(k))$. A possibility of computing ($n$-step) PRS for $e_u$ is to transform a PRS on $e(k)$, i.e. 
\begin{align*}
  &\Pr\!\left(e(n) \in \set{R}_n | e(0)\right) \geq p \\ 
  \Rightarrow &\Pr\!\left(e_u(n) \in \pi\!\left(\set{R}_n\right) \, \middle| \,
\pi\!\left(e(0) \right)\right) \geq p \, ,
\end{align*}
where $\pi\!\left(\set{R}_n\right) = \setst{\pi(e)}{e \in \set{R}_n}$. This,
however, can lead to significant conservatism, especially if $\pi$ maps to a
lower dimensional space, i.e. when $n_u < n_x$, as is often the case. 

A less conservative approach is therefore to first construct the distribution
(or mean and variance information) of $e_u$ and to find PRS based on this
information directly. In the case of linear feedback laws $e_u(k) = Ke(k)$ this
is often straightforward and PRS construction can be similarly carried out
along~\eqref{eq:chebychevPRS} as well as Remark~\ref{rm:Riid} and
\ref{rm:RGauss} by considering $\E(e_u(n)) = K e(n)$ and $\var(e_u(n)) = K
\var(e(n)) K^\tp$.
\section{Stochastic MPC using Probabilistic Reachable Sets}\label{sec:MPC}
The goal is to find an approximate solution to the optimal control problem
in~\eqref{eq:stochOptimalControl} by solving it over a shortened horizon $N$ in
a receding horizon fashion, i.e.\ as an MPC controller. To highlight the
difference between predicted quantities and the closed-loop quantities
in~\eqref{eq:system} we make use of the subscript $i$, 
resulting in 
\begin{align}\label{eq:predictiveSystem}
 x_{i+1} = A x_i + B u_i + w_i \, ,
\end{align}
where the predictions dynamics are initialized at the currently measured state
at each time step, i.e. $x_0(k) = x(k)$. The predicted
disturbance sequence $W_k = [w_0^\tp, \ldots, w_N^\tp]^\tp$ includes the
information about $W$ up to time step
$k$. Assuming access to past disturbance realizations it is therefore
distributed according to $\dist^{W_k}$ defined by the conditional distribution 
\begin{align*}
  &p(W_k) = \\ 
  &p\! \left({[w(k)^\tp, \ldots, w(k\!+\!N)^\tp]}^\tp \! \, \middle| \, {[w(0)^\tp, \ldots,
w(k\!-\!1)^\tp]}^\tp \right) \, .
\end{align*} 
The resulting predicted state sequence $X_k = [x_0,\ldots,x_N]$ is therefore
similarly a random variable.

The differences between closed-loop distributions of
\begin{align*}
  W &= [w(0),\ldots, w(\bar{N}] \, , \\
  X &= [x(0),\ldots, x(\bar{N}] 
\end{align*}
and predictive distributions of 
\begin{align*}
  W_k &= [w_0,\ldots, w_N] \, , \\
  X_k &= [x_0,\ldots, x_N] 
\end{align*}
are illustrated in Figure~\ref{fg:condIllustration}.
The upper plot displays different samples of the disturbance sequence $W$ as
well as a confidence region for the disturbance, such that realizations lie
within the shaded red region with a certain probability. Additionally, the
predictive distribution $W_k$ at time step $k$, including confidence
region and samples, is shown for a particular realization of $W$. 
It is evident that for highly correlated sequences, the predicted
distribution strongly depends on past disturbance realizations.

The lower plot illustrates the resulting state distribution
$X$ from a dynamic system under disturbance $W$ and receding horizon control,
including the predictive distribution $X_k$ for a specific realization. 
Note that the distribution of $X$ is typically not available,
since it results from a receding horizon control law. Under a suitable
restriction of predictive control policies $\pi_i$, however,
the predictive distribution $X_k$ is typically available, since the predicted
system is subject to simple, often linear, predictive dynamics.
Of particular interest is the relation to the stated chance
constraints~\eqref{eq:chanceConstraints}, which are indicated by the dashed
line. While the closed-loop dynamics satisfy the displayed chance constraints 
with high probability, this is not necessarily the case for the
predicted distributions, as evident from the example distribution of $X_k$.
Constraining the \emph{predictive} constraint violation probabilities can
therefore lead to an overly conservative controller, and often
feasibility issues, especially when considering unbounded disturbances.

With these consideration we formulate a shortened horizon optimization
problem with chance constraints conditioned on the initial state $x(0)$, i.e.
\begin{mini!}[4]
	{\{\pi_i\}}{ \E_{W_k} \left( l_f(x_N) + \sum_{i=0}^{N-1} l_{k+i}(x_i,u_i) \right)}
	{\label{eq:MPC_1}}{}
  \addConstraint{x_{i+1}}{ = A x_i + B u_i + w_i}
  \addConstraint{u_i}{ = \pi_i(x_0,w_0,\ldots,w_i)}
  \addConstraint{W_k}{= [w_0^\tp, \ldots, w_N^\tp]^\tp \sim \dist^{W_k}}
  \addConstraint{\Pr(x_{i}}{\in \set{X}\,|\, x(0) ) \geq p_x \label{eq:MPC_1_cs1}}
  \addConstraint{\Pr(u_{i}}{\in \set{U}\,|\, x(0) ) \geq p_u\label{eq:MPC_1_cs2}}
	\addConstraint{x_0}{= x(k) \, ,}
\end{mini!}
for all $i = 0,\ldots,N\!-\!1$, where we use a terminal cost $l_f(x_N)$ to 
approximate the remainder of the
horizon of the stochastic optimal control
problem~\eqref{eq:stochOptimalControl} and to establish asymptotic performance
bounds (see Section~\ref{subsec:Stability}). 
\begin{remark}\label{rm:MPCconstraints}
  The formulation of chance constraints~\eqref{eq:MPC_1_cs1}, \eqref{eq:MPC_1_cs2} considers the probability conditioned on the
  initial state $x(0)$,
i.e.\ it is subject to random variables $w(0),\ldots,w(k\!-\!1)$ as well as
$w_0(k),\ldots,w_i(k)$ for the $i$-th step constraint. 
This is in contrast to \emph{predictive} chance constraints conditioned on the
currently measured state $x(k)$, which are often considered in stochastic MPC 
formulations~\cite{Farina2016,Hewing2018b} and are only subject to $w_0(k),\ldots,w_i(k)$.
\end{remark}
\begin{figure}
  \includegraphics{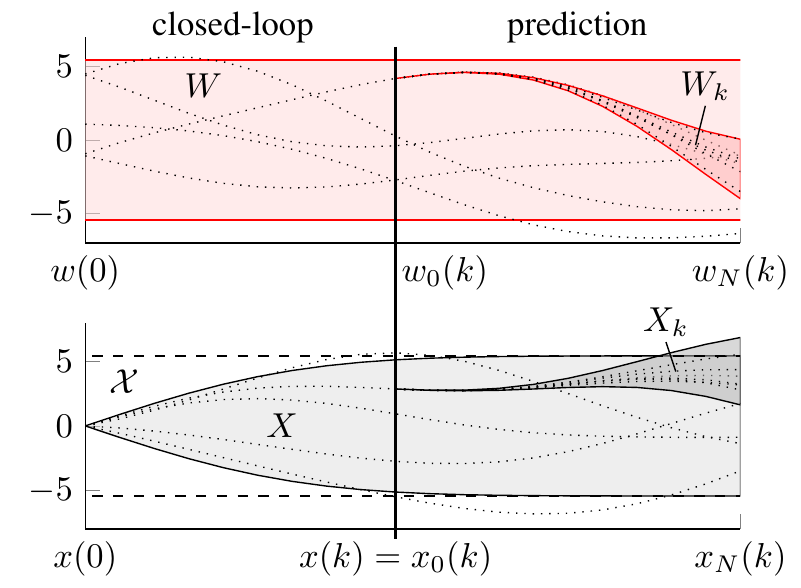}
  \caption{Example plot of scalar system illustrating closed-loop and predictive
  probability distributions. The plots show samples from a correlated
  disturbance sequence $W$ (upper plot) and the closed-loop state sequence $X$
  (lower plot), together with shaded confidence regions. Examples of
  predictive distributions of $W_k$ and $X_k$ are shown on the right hand side.
  Additionally, chance constraints $\set{X}$ are indicated by dashed lines. The
  figure illustrates that while the closed-loop distribution $X$ satisfies
  constraints, this is not necessarily the case for predictive distributions
  $X_k$.}\label{fg:condIllustration}
\end{figure}
%
\subsection{Nominal Dynamics and MPC Initialization}
In order to reformulate
the probabilistic cost and constraints, we split
dynamics~\eqref{eq:predictiveSystem} into a nominal part $z_i$ and error $e_i = x_i -
z_i$. We furthermore restrict the class of control policies in the prediction to
affine feedback laws acting on the error with a predefined gain $u_i = Ke_i +
v_i$. This results in predictive dynamics in nominal and error state given by
\begin{subequations}
\begin{align}
z_{i+1} &= Az_i + B v_i \, , \label{eq:nominal_dyn}\\
e_{i+1} &= (A + BK) e_i + w_i \, .
\end{align}
\end{subequations}
We couple these predicted dynamics to the closed-loop system by defining
\begin{subequations}\label{eq:updateRule}
  \begin{align}
    z_0(k) &= z_1(k\!-\!1) \, , \\ 
    x_0(k) &= x(k) \, ,
  \end{align}
\end{subequations}
where we initialize the nominal state at time $0$ to $z(0) = x(0)$. The nominal
predicted state $z_0(k)$ is hence updated to its value predicted in
the previous time step $z_1(k\!-\!1)$, while $x_0(k)$ is updated to the measured
state $x(k)$, introducing feedback into the optimization problem, as further
discussed in Section~\ref{subsec:tractableMPC} and demonstrated in simulation in
Section~\ref{subsec:doubleIntegrator}.
Since we assume a fixed gain $K$, the optimization
is then carried out over $\{v_i\}$ and the resulting input applied to the
closed-loop system~\eqref{eq:system} is
\begin{align} 
  u(k) = K e_0(k) + v^*_0(k) \, ,\label{eq:RHcontrolLaw}
\end{align}
where $e_0(k)$ and $v^*_0(k)$ are the state error and optimal nominal
control input, respectively, at time step $k$. 

Defining the nominal state $z(k) = z_0(k)$ and the error state $e(k) = e_0(k) =
x(k) - z(k)$ also for the closed-loop system~\eqref{eq:system} and assuming
feasibility of the optimization (which will be shown in Section~\ref{subsec:RecFeasibility}), we notice that due to the choice of $z_0(k) =
z_1(k\!-\!1)$ and control law~\eqref{eq:RHcontrolLaw} we have
\begin{equation} \label{eq:CLerror}
  e(k\!+\!1) = (A + BK)e(k) + w(k) \, ,
\end{equation}
meaning that the closed-loop error dynamics remain linear, even under the
receding horizon controller, which constitutes a nonlinear control law.
%
\subsection{Stochastic MPC Formulation}\label{subsec:tractableMPC}
In order to satisfy chance
constraints~\eqref{eq:MPC_1_cs1},~\eqref{eq:MPC_1_cs2}, we propose an analytic  
approximation using tightened deterministic constraints
on the nominal system state $z$ and input $v$. To this end, we make use of 
suitable PRS of the error system, which are employed similarly to an invariant
error set in robust MPC.\@ To accommodate different probability levels for input
and state constraints, we make use of two reachable sets of different
probability 
levels for each time step in the control horizon $0 \leq k \leq \bar{N}$, i.e.
\begin{align*}
&\set{R}^x_k \text{ with } \Pr(e(k) \in \set{R}^x_k | e(0) = 0) \geq p_x, \, \\
  &\set{R}^u_k \text{ with } \Pr(Ke(k) \in \set{R}^u_k | Ke(0) = 0) \geq p_u 
\end{align*} and arrive
at tightened constraints of 
\begin{align}
  z_i &\in \set{X} \ominus \set{R}^x_{i+k} \, ,\\
  v_i &\in \set{U} \ominus \set{R}^u_{i+k} \, .
 \end{align}
In order to guarantee recursive feasibility of the optimization problem, we furthermore introduce a terminal set $\set{Z}_f$ and terminal tightening PRS
$\set{R}_f$ with $\set{R}_f \supseteq \bigcup_{k=0}^{\bar{N}} \set{R}_k^x$ and
$K \set{R}_f \supseteq \bigcup_{k=0}^{\bar{N}} \set{R}_k^u$, satisfying the
following properties:
\begin{assumption}[Terminal invariance]\label{ass:terminalInvariance}
 The terminal set $\set{Z}_f \subseteq \set{X} \ominus \set{R}_f$ is
 positively invariant for system~\eqref{eq:nominal_dyn} under the control
 law $v = Kz$, i.e.\ for all $z \in \set{Z}_f$ we have $(A \! + \! BK)z \in
 \set{Z}_f$, and $K \set{Z}_f \subseteq \set{U} \ominus K\set{R}_f$.
\end{assumption}

Combining these ingredients, the resulting tractable stochastic MPC optimization
problem is defined as follows:
\begin{mini!}[4]
	{\{v_i\}}{ \E_{W_k} \left( l_f(x_N) + \sum_{i=0}^{N-1} l_{k+i}(x_i,u_i)\right)\label{eq:MPC_2_cost}}
	{\label{eq:MPC_2}}{}
  \addConstraint{x_{i+1}}{= z_{i+1} + e_{i+1}}
  \addConstraint{z_{i+1}}{= Az_i + B v_i}
  \addConstraint{e_{i+1}}{= (A+BK)e_i + w_i}
  \addConstraint{W_k}{= [w_0^\tp, \ldots, w_N^\tp]^\tp \sim \dist^{W_k}}
  \addConstraint{z_i}{\in \set{X} \ominus \set{R}^x_{i+k}\label{eq:MPC_2_cs1}}
  \addConstraint{v_i}{\in \set{U} \ominus \set{R}^u_{i+k}\label{eq:MPC_2_cs2}}
  \addConstraint{z_N}{\in \set{Z}_f\label{eq:MPC_2_termSet}}
	\addConstraint{x_0}{= x(k), \, z_0 = z_1(k\!-\!1)\, , e_0 = x_0 - z_0 \, ,}
\end{mini!}
for all $i = 0,\ldots,N\!-\!1$. Different from other robust and stochastic MPC
approaches there is no direct feedback from the measured state $x(k)$ on the
updated nominal state $z_0$. Note that feedback also on the nominal trajectory
$z(k)$ is nevertheless introduced via the cost in optimization
problem~\eqref{eq:MPC_2}.
We will show in the following that, due to the linear evolution of the
closed-loop error~\eqref{eq:CLerror}, this indirect form of feedback offers
benefits for the theoretical properties in terms of recursive feasibility,
closed-loop chance constraint satisfaction and stability properties, as
discussed in the following sections.
%
\subsection{Recursive Feasibility and Chance Constraint Satisfaction}\label{subsec:RecFeasibility}
%
Since the stochastic variables in optimization problem~\eqref{eq:MPC_2} only
affect the cost, recursive feasibility can be established in terms of the
nominal state $z_i$ and input $v_i$. Due to the considered update law
in~\eqref{eq:updateRule} this follows standard
arguments in predictive control.
\begin{theorem}[Recursive feasibility]\label{thm:recFeas}
Consider system~\eqref{eq:system} under the control law~\eqref{eq:RHcontrolLaw}
resulting from~\eqref{eq:MPC_2}. If optimization problem~\eqref{eq:MPC_2} is
feasible for $x(0) = z(0)$, then it is recursively feasible, i.e. it is feasible
for all times $0 \leq k \leq \bar{N}\!-\!N$.
\end{theorem}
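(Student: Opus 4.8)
The plan is to prove recursive feasibility by the standard candidate-solution argument of nominal predictive control, exploiting that in~\eqref{eq:MPC_2} the disturbance $W_k$ enters only the cost~\eqref{eq:MPC_2_cost} while all constraints act on the deterministic nominal pair $(z_i,v_i)$. Feasibility is therefore purely a property of the nominal system $z_{i+1}=Az_i+Bv_i$ under the tightened constraints~\eqref{eq:MPC_2_cs1},~\eqref{eq:MPC_2_cs2} and the terminal set~\eqref{eq:MPC_2_termSet}, and no probabilistic reasoning is needed. Since feasibility at $k=0$ holds by hypothesis, I would argue by induction on $k$: assuming an optimal (hence feasible) solution $\{v_i^*(k)\}_{i=0}^{N-1}$ with nominal states $\{z_i^*(k)\}_{i=0}^{N}$ at time $k$, I construct an explicit feasible candidate for the problem at time $k+1$.

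The candidate is the shifted sequence with the terminal feedback appended: set $\hat v_i = v_{i+1}^*(k)$ for $i=0,\ldots,N-2$ and $\hat v_{N-1}=Kz_N^*(k)$, which generates nominal states $\hat z_i=z_{i+1}^*(k)$ for $i=0,\ldots,N-1$ and $\hat z_N=(A+BK)z_N^*(k)$. The first thing to check is consistency with the initialization~\eqref{eq:updateRule}: by construction $\hat z_0=z_1^*(k)=z_1(k)$, which is exactly the value $z_0(k+1)$ prescribed by the update rule. This is the place where the design choice $z_0(k)=z_1(k\!-\!1)$---as opposed to $z_0(k)=x(k)$---is essential: it decouples the nominal initialization from the measured stochastic state and renders the nominal trajectory self-consistent across successive time steps, which is precisely what reduces the problem to the deterministic case.

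The core of the argument is the constraint bookkeeping under the \emph{time-varying} tightening. For the shifted portion $i=0,\ldots,N-2$ the indices align automatically, since $\hat z_i=z_{i+1}^*(k)\in\set{X}\ominus\set{R}^x_{(i+1)+k}=\set{X}\ominus\set{R}^x_{i+(k+1)}$ and likewise $\hat v_i=v_{i+1}^*(k)\in\set{U}\ominus\set{R}^u_{i+(k+1)}$, so~\eqref{eq:MPC_2_cs1} and~\eqref{eq:MPC_2_cs2} at time $k+1$ are inherited from their counterparts at time $k$. The terminal step $i=N-1$ is where Assumption~\ref{ass:terminalInvariance} and the covering properties of $\set{R}_f$ enter: from $z_N^*(k)\in\set{Z}_f$, invariance gives $\hat z_N=(A+BK)z_N^*(k)\in\set{Z}_f$, discharging~\eqref{eq:MPC_2_termSet}; moreover $\set{Z}_f\subseteq\set{X}\ominus\set{R}_f$ together with $\set{R}_f\supseteq\set{R}^x_{N+k}$ and the inclusion-reversing property of the Pontryagin difference yields $\hat z_{N-1}=z_N^*(k)\in\set{X}\ominus\set{R}^x_{N+k}$, while the symmetric $K\set{Z}_f\subseteq\set{U}\ominus K\set{R}_f$ with $K\set{R}_f\supseteq\set{R}^u_{N+k}$ handles $\hat v_{N-1}=Kz_N^*(k)$.

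I expect the only genuinely delicate point to be this terminal step, namely verifying that the single terminal set $\set{R}_f$ dominates the tightening sets $\set{R}^x_{N+k}$ and $\set{R}^u_{N+k}$ that appear after the shift. This holds precisely because the admissible range $0\le k\le\bar{N}-N$ guarantees $N+k\le\bar{N}$, so that $\set{R}^x_{N+k}$ and $\set{R}^u_{N+k}$ lie in the unions $\bigcup_{j=0}^{\bar{N}}\set{R}^x_j$ and $\bigcup_{j=0}^{\bar{N}}\set{R}^u_j$ covered by $\set{R}_f$ and $K\set{R}_f$, respectively. Everything else amounts to routine index shifting, completing the induction.
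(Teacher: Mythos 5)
Your proposal is correct and follows essentially the same argument as the paper: the shifted candidate $\{v_1(k),\ldots,v_{N-1}(k),Kz_N(k)\}$ with appended terminal feedback, consistency with the update rule $z_0(k\!+\!1)=z_1(k)$, index-shifted inheritance of the tightened constraints, and Assumption~\ref{ass:terminalInvariance} together with the covering properties of $\set{R}_f$ for the terminal step. Your index bookkeeping at the terminal step (requiring $\set{R}^x_{N+k}$ and $\set{R}^u_{N+k}$, valid since $N+k\le\bar{N}$) is in fact more careful than the paper's own write-up, which contains minor index slips at exactly that point.
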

\begin{proof}
Let $V = \{v_0(k),\ldots,v_{N-1}(k)\}$ be the optimal solution of optimization
problem~\eqref{eq:MPC_2} at time step $k$ with $Z = \{z_0(k),\ldots,z_N(k)\}$
the resulting nominal state trajectory, satisfying the terminal
constraint~\eqref{eq:MPC_2_termSet} as well as constraints~\eqref{eq:MPC_2_cs1}
and~\eqref{eq:MPC_2_cs2}. We want to find a candidate solution $\bar{V} =
\{\bar{v}_0(k\!+\!1),\ldots,\bar{v}_{N-1}(k\!+\!1)\}$ which similarly satisfies
the terminal constraint~\eqref{eq:MPC_2_termSet} and
constraints~\eqref{eq:MPC_2_cs1}, \eqref{eq:MPC_2_cs2} with $\set{R}^x_{i+k+1}$
and $\set{R}^u_{i+k+1}$, i.e. for the next time step $k\!+\!1$. We choose this
candidate solution by shifting $V$ and applying the linear control gain $K$ in
the final time step, i.e. $\bar{V} = \{v_1(k),\ldots,v_{N-1}(k),K z_N(k) \}$.
The first $N-1$ entries evidently fulfill input constraints~\eqref{eq:MPC_2_cs2}
again, since $\bar{v}_i(k\!+\!1) = v_{i+1}(k) \in  \set{U} \ominus
\set{R}^u_{i+k+1}$. Due to Assumption~\ref{ass:terminalInvariance} we
furthermore have that the final entry $\bar{v}_{N-1}(k\!+\!1) = K z_N(k) \in
\set{U} \ominus K\set{R}_f$ and since $K \set{R}_f \supseteq \set{R}^u_j$ for
all $j = 0, \ldots, \bar{N}$ it similarly satisfies the constraint
$\bar{v}_N(k\!+\!1) \in \set{U} \ominus \set{R}^u_{N+k+1}$ for $k\!+\!1 \leq
N\!-\! \bar{N}$. Since $z_0(k\!+\!1) = z_1(k)$ the resulting candidate
trajectory for the nominal state is $\bar{Z} =
\{z_1(k),\ldots,z_N(k),(A+BK)z_N(k) \}$. Due to an analogue argument this
satisfies constraints~\eqref{eq:MPC_2_cs1}, and $(A+BK)z_N \in \set{Z}_f$ due to
Assumption~\ref{ass:terminalInvariance}. 
\end{proof}
We can furthermore establish that optimization problem~\eqref{eq:MPC_2} results
in a feasible solution to~\eqref{eq:MPC_1} by noting the following.
\begin{lemma}\label{lm:predDist}
  Consider system~\eqref{eq:system} under the control
  law~\eqref{eq:RHcontrolLaw} resulting from~\eqref{eq:MPC_2}. 
  Conditioned on $x(0)$, the predicted error has the same distribution
as the closed-loop error, i.e. $e_i(k) \overset{d}{=} e(k\!+\!i)$ for $0\leq i
\leq N$.
\end{lemma}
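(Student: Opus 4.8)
The plan is to solve both linear error recursions in closed form and show they are the same affine function of a common initial condition, driven by disturbance sequences that share the same conditional law. Unrolling the closed-loop recursion~\eqref{eq:CLerror} from time $k$ gives
\begin{equation*}
  e(k\!+\!i) = (A\!+\!BK)^i e(k) + \sum_{j=0}^{i-1} (A\!+\!BK)^{i-1-j} w(k\!+\!j)\, ,
\end{equation*}
and unrolling the predicted recursion $e_{i+1} = (A+BK)e_i + w_i$ from $e_0(k)$ yields the identical expression with $e(k)$ replaced by $e_0(k)$ and $w(k\!+\!j)$ replaced by $w_j$. Since both use the same transition matrices, it remains only to match the initial condition and the driving disturbances.

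The initial conditions agree exactly: by the update rule~\eqref{eq:updateRule}, $z_0(k) = z_1(k\!-\!1) = z(k)$ and $x_0(k) = x(k)$, so $e_0(k) = x(k) - z(k) = e(k)$. Furthermore $e(k)$ is a deterministic function of $x(0)$ and the realized past $w(0),\ldots,w(k\!-\!1)$ (with $e(0)=0$), so conditioned on this past the only randomness in $e_i(k)$ enters through $w_0,\ldots,w_{i-1}$. By the definition of the predictive distribution $\dist^{W_k}$ as the conditional law of $[w(k)^\tp,\ldots,w(k\!+\!N)^\tp]^\tp$ given $[w(0)^\tp,\ldots,w(k\!-\!1)^\tp]^\tp$, the segment $(w_0,\ldots,w_{i-1})$ has, conditioned on that past, exactly the same joint distribution as $(w(k),\ldots,w(k\!+\!i\!-\!1))$.

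I would conclude with a conditioning argument: conditioned on $x(0)$ and the realized past, both $e(k\!+\!i)$ and $e_i(k)$ are the same fixed affine map applied to identically distributed disturbance vectors, hence equal in distribution; marginalizing over the past via the tower property then gives $e_i(k) \overset{d}{=} e(k\!+\!i)$ conditioned on $x(0)$. I expect the main obstacle to be exactly this nested conditioning, since $\dist^{W_k}$ is itself defined conditionally on the realized past---one must first establish the equality conditionally on that past and only then marginalize to obtain the statement conditioned on $x(0)$ alone. An induction on $i$ directly on the marginals would be awkward here, as the step $e_{i+1}=(A+BK)e_i+w_i$ requires matching the \emph{joint} law of error and next disturbance; writing the solution explicitly in terms of the disturbance sequence circumvents this.
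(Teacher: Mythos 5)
Your proposal is correct and follows essentially the same route as the paper's proof: unroll both linear error recursions in closed form, identify $e_0(k) = e(k)$ via the update rule, and invoke the defining property of $\dist^{W_k}$ as the conditional law of the future disturbances given the realized past. The only difference is bookkeeping: the paper additionally substitutes $e(k) = \sum_{j=0}^{k-1}(A+BK)^{k-1-j}w(j)$ (using $e(0)=0$), so that $e_i(k)$ becomes a measurable function of the concatenated vector $[w(0)^\tp,\ldots,w(k\!-\!1)^\tp,w_0(k)^\tp,\ldots]^\tp$, whose joint law equals that of $[w(0)^\tp,\ldots,w(k\!+\!i\!-\!1)^\tp]^\tp$, which yields the claim in a single substitution where you instead condition on the realized past and then marginalize via the tower property.
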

\begin{proof}
  See Appendix.
\end{proof}
From Lemma~\ref{lm:predDist} it follows directly that if $\set{R}_{i+k}$ is an
$i+k$ step PRS of level $p$ for the error system~\eqref{eq:CLerror}, then
\[ \Pr(e_i(k) \in \set{R}_{i+k} \,|\, x(0)) \geq p \, ,\] resulting in
feasibility in~\eqref{eq:MPC_1}, i.e. $\{ \pi_i^* = Ke_i + v_i^* \}$ obtained
from~\eqref{eq:MPC_2} is a feasible solution in~\eqref{eq:MPC_1}. This directly
relates to the fact that recursive feasibility due to Theorem~\ref{thm:recFeas}
implies satisfaction of closed-loop chance
constraints~\eqref{eq:chanceConstraints}.
\begin{theorem}[Chance constraint satisfaction]
  Consider system~\eqref{eq:system} under the control
  law~\eqref{eq:RHcontrolLaw} resulting from~\eqref{eq:MPC_2}. The resulting
  states $x(k)$ and inputs $u(k)$ satisfy the closed-loop chance
  constraints~\eqref{eq:chanceConstraints}.
 \end{theorem}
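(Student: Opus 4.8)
The plan is to reduce each closed-loop chance constraint to the defining probabilistic reachability property of the corresponding tightening set, by splitting the realized state and input into a nominal part controlled through recursive feasibility and a stochastic error part controlled through the PRS. First I would invoke Theorem~\ref{thm:recFeas} to guarantee that optimization problem~\eqref{eq:MPC_2} is feasible at every closed-loop time step. Evaluating the tightened constraints~\eqref{eq:MPC_2_cs1} and~\eqref{eq:MPC_2_cs2} at prediction index $i=0$ then shows that the realized nominal quantities satisfy $z(k) = z_0(k) \in \set{X} \ominus \set{R}^x_k$ and $v^*_0(k) \in \set{U} \ominus \set{R}^u_k$ at each step $k$, where the terminal ingredients of Assumption~\ref{ass:terminalInvariance} (with $\set{R}_f \supseteq \set{R}^x_k$ and $K\set{R}_f \supseteq \set{R}^u_k$) cover the final steps of the horizon.

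Next I would decompose the closed-loop quantities as $x(k) = z(k) + e(k)$ and, using control law~\eqref{eq:RHcontrolLaw} together with $e(k) = e_0(k)$, as $u(k) = v^*_0(k) + Ke(k)$. By the definition of the Pontryagin difference, $z(k) \in \set{X} \ominus \set{R}^x_k$ means $z(k) + r \in \set{X}$ for every $r \in \set{R}^x_k$; hence the event $\{e(k) \in \set{R}^x_k\}$ is contained in the event $\{x(k) \in \set{X}\}$, which yields $\Pr(x(k) \in \set{X} \,|\, x(0)) \geq \Pr(e(k) \in \set{R}^x_k \,|\, x(0))$. The analogous inclusion $\{Ke(k) \in \set{R}^u_k\} \subseteq \{u(k) \in \set{U}\}$ gives the corresponding lower bound for the input.

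It then remains to identify $\Pr(e(k) \in \set{R}^x_k \,|\, x(0))$ with the reachability probability that \emph{defines} $\set{R}^x_k$. Here I would use the closed-loop error recursion~\eqref{eq:CLerror}, $e(k\!+\!1) = (A+BK)e(k) + w(k)$, initialized at $e(0) = x(0) - z(0) = 0$. Since $e(k)$ is thus a deterministic function of the disturbances $w(0),\ldots,w(k\!-\!1)$, whose distribution does not depend on $x(0)$, conditioning on $x(0)$ coincides with conditioning on $e(0)=0$, so $\Pr(e(k) \in \set{R}^x_k \,|\, x(0)) = \Pr(e(k) \in \set{R}^x_k \,|\, e(0)=0) \geq p_x$ by the defining property of $\set{R}^x_k$. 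Applying the same argument to $Ke(k)$ and $\set{R}^u_k$ establishes $\Pr(u(k) \in \set{U} \,|\, x(0)) \geq p_u$.

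The main obstacle is precisely this last identification of the conditional probabilities: it hinges on the indirect-feedback structure, which makes the closed-loop error~\eqref{eq:CLerror} evolve as an autonomous linear system decoupled from the (nonlinear) receding-horizon control law. Were feedback introduced directly on the nominal initialization $z_0$, the error $e(k)$ would depend on the realized inputs and hence on the past state trajectory, and the clean equivalence with the PRS—computed once, offline, from the disturbance statistics alone—would break down. The remaining steps are routine once this decoupling and recursive feasibility are in hand.
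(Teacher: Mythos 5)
Your proposal is correct and follows essentially the same argument as the paper's proof: recursive feasibility gives $z(k) = z_0(k) \in \set{X} \ominus \set{R}^x_k$, the linear closed-loop error dynamics~\eqref{eq:CLerror} with $e(0)=0$ give $\Pr(e(k) \in \set{R}^x_k \,|\, x(0)) \geq p_x$ by the PRS definition, and the decomposition $x(k) = z(k) + e(k)$ combines the two via the Pontryagin difference, with the analogous argument for inputs. You simply spell out more explicitly than the paper why conditioning on $x(0)$ reduces to conditioning on $e(0)=0$, which is a faithful elaboration rather than a different route.
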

\begin{proof}
Due to the linear evolution of the closed-loop error~\eqref{eq:CLerror} and by
definition of $k$-step PRS $\set{R}^x_k$ we have that $\Pr(e(k) \in \set{R}^x_k
\, | \, x(0)) \geq p_x$ for all $0 \leq k \leq \bar{N}$. Due to feasibility
of~\eqref{eq:MPC_2} we furthermore have $z(k) = z_0(k) \in \set{X} \ominus
R^x_{k}$. With $x(k) = z(k) + e(k)$ it therefore holds $\Pr(x(k) \in \set{X} \,
| \, x(0)) \geq p_x$. The same argument holds for the input constraints.
\end{proof}
We have therefore shown, that an MPC controller based on
formulation~\eqref{eq:MPC_2} satisfies closed-loop chance constraints and is
therefore a feasible solution to the stochastic optimal control
problem~\eqref{eq:stochOptimalControl}. 
\begin{remark}\label{rm:softConstraints}
In~\eqref{eq:MPC_2}, feedback on the nominal system is introduced through
the cost only, while constraints are satisfied w.r.t.\ a nominal state. 
A straightforward extension is to consider a soft-constraint
term in the cost and thereby introduce feedback from measurements also w.r.t.
the constraints, which can be beneficial e.g.\ in cases of model mismatch while
maintaining theoretical guarantees, as demonstrated in examples in
Section~\ref{subsec:doubleIntegrator}.
\end{remark}
The presented MPC optimization problem requires the evaluation of an expected
cost~\eqref{eq:MPC_2_cost}. Depending on the specific form, evaluation of the
expected value can be computationally expensive. For some cost functions, e.g.\
linear or quadratic costs, however, this can be done cheaply based on the
moments of the predicted state $x_i$, which will be outlined in the following
section, together with a resulting average asymptotic performance bound.
%
\subsection{Quadratic Costs and Asymptotic Average Bound}
\label{subsec:Stability}
%
Consider the special case of regularization with zero mean i.i.d.\ disturbances,
that is 
\begin{equation} w(k) \text{ i.i.d.}, \, \E(w(k)) = 0\, , \var(w(k)) = \Sigma^w 
\, ,\label{eq:iidNoise}\end{equation}
for all $0 \leq k \leq \bar{N}$ and a quadratic cost function 
\begin{subequations}\label{eq:quadraticCost}
\begin{align}
  l_k(x,u) &= \Vert x \Vert_Q^2 + \Vert u \Vert_R^2 \ \forall \, k = 0,\ldots, 
  \bar{N} \, ,\\
  l_f(x) &= \Vert x\Vert_P^2 \, ,
\end{align}
\end{subequations}
with $Q \succeq 0$, $R \succ 0$. For the terminal cost we assume the following
to hold. 
\begin{assumption}[Terminal cost weight]\label{ass:LyapCost}
  The terminal weight $P$ is chosen as the solution to the Lyapunov equation
  \[ (A+BK)^\tp P (A+BK) - P = - (Q + KRK^\tp) \, .\]
  If $K$ is stabilizing, this solution always exists.
\end{assumption}
For the quadratic cost function, evaluation of the expected value can be
carried out in terms of mean and variance of $x_i$, $u_i$, denoted $\mu_i^x$, 
$\mu^u_i$ and $\Sigma^x_i$, $\Sigma^u_i$, respectively, as
\begin{align*}
  &\E_{W_k} \left( \Vert x_N \Vert_P^2 + \sum_{i=0}^{N-1} \Vert x_i \Vert_Q^2 + 
  \Vert u_i \Vert_R^2\right)\\
  = &\Vert \mu^x_N \Vert_P^2 + \tr(P \Sigma^x_N) \\ 
  + &\sum_{i=0}^{N-1} \Vert \mu^x_i \Vert_Q^2 + \tr(Q\Sigma^x_i) + \Vert 
  \mu^u_i \Vert_R^2 + \tr(R \Sigma^u_i) \, .
\end{align*}
Due to the linear prediction dynamics, the evolution of state and input mean 
and variance can be readily expressed as
\begin{align*}
  \mu^x_i &= z_i + (A+BK)^i e(k) \, , \\ 
  \mu^u_i &= v_i + K(A+BK)^i e(k) \, , \\ 
  \Sigma^x_i &= \sum_{j=0}^i(A+BK)^j \Sigma^w  \left((A+BK)^j\right)^\tp \, ,\\ 
  \Sigma^u_i &= K \Sigma^x_i K^\tp \, .
\end{align*}
Note that variances $\Sigma^x_i$ and $\Sigma^u_i$ are not affected by $v$ and can
therefore be neglected in an implementation of problem~\eqref{eq:MPC_2} with
cost function~\eqref{eq:quadraticCost}. After computation of the constraint
tightening, the problem is therefore of similar complexity as nominal MPC.

For cost function~\eqref{eq:quadraticCost} under zero mean i.i.d. noise, we can 
establish an asymptotic average performance bound, based on a cost decrease in 
expectation. 
\begin{theorem}[Cost decrease]\label{thm:costDecrease} 
  Consider system~\eqref{eq:system} subject to i.i.d. 
  disturbances~\eqref{eq:iidNoise} under the control law~\eqref{eq:RHcontrolLaw}
  resulting from~\eqref{eq:MPC_2} with cost function~\eqref{eq:quadraticCost}. 
  Let $J^*(x,z)$ be the optimal cost of~\eqref{eq:MPC_2}, then
  \begin{align*}
    &\E\left(J^*(x(k\!+\!1),z(k\!+\!1)) - J^*(x(k),z(k))\, \middle| \, x(k),z(k)
    \right) \nonumber  \\
    &\leq -\Vert x(k) \Vert_Q^2 - \Vert u(k)\Vert_R^2 + \tr(P \Sigma^w) \, .
  \end{align*}
\end{theorem}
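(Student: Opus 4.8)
The plan is to run the standard receding-horizon Lyapunov-decrease argument on the optimal cost, but with careful bookkeeping of the stochastic terms that the measured-state feedback introduces. Writing $A_K := A+BK$, I would reuse the recursively feasible candidate from the proof of Theorem~\ref{thm:recFeas}, namely $\bar{V} = \{v^*_1(k),\ldots,v^*_{N-1}(k),Kz_N(k)\}$ with nominal trajectory $\bar{Z} = \{z_1(k),\ldots,z_N(k),A_K z_N(k)\}$. Since Theorem~\ref{thm:recFeas} guarantees this candidate is feasible for~\eqref{eq:MPC_2} at time $k+1$, optimality gives $J^*(x(k+1),z(k+1)) \leq J^{\mathrm{cand}}(x(k+1),z(k+1))$, and it remains to bound $\E(J^{\mathrm{cand}}(x(k+1),z(k+1)) - J^*(x(k),z(k)) \mid x(k),z(k))$. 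Conditioned on $(x(k),z(k))$ the optimizer $v^*_0(k)$, and hence $z(k+1) = z_1(k)$, are deterministic, so the only randomness is $w(k)$, entering through $e(k+1) = A_K e(k) + w(k)$.

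Next I would evaluate both costs in the mean--variance form derived above. The key structural observation is that the variance contributions $\tr(P\Sigma^x_N)$ and $\tr(Q\Sigma^x_i) + \tr(R\Sigma^u_i)$ depend only on the prediction index and on $\Sigma^w$ --- not on the time step --- because the error recursion and the i.i.d.\ noise statistics are time-invariant; they are therefore identical in $J^*(k)$ and $J^{\mathrm{cand}}(k+1)$ and cancel exactly in the difference. For the squared-mean terms I would substitute $e(k+1) = A_K e(k) + w(k)$ into $\bar{\mu}^x_i = \bar{z}_i + A_K^i e(k+1)$ and $\bar{\mu}^u_i = \bar{v}_i + K A_K^i e(k+1)$, obtaining $\bar{\mu}^x_i = \mu^x_{i+1}(k) + A_K^i w(k)$ and $\bar{\mu}^u_i = \mu^u_{i+1}(k) + K A_K^i w(k)$ for the shifted indices $i \leq N-2$, together with $\bar{\mu}^u_{N-1} = K\mu^x_N(k) + K A_K^{N-1} w(k)$ and $\bar{\mu}^x_N = A_K \mu^x_N(k) + A_K^N w(k)$ for the terminal block.

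Taking $\E(\cdot \mid x(k),z(k))$, i.e.\ averaging over $w(k)$, splits each squared-mean term into a deterministic part and a noise part: since $\E(w(k)) = 0$ the cross terms vanish, and $\E(w(k)w(k)^\tp) = \Sigma^w$ produces trace contributions $\tr((\cdot)A_K^i\Sigma^w(A_K^i)^\tp)$. The deterministic part telescopes exactly as in nominal MPC: the shifted stage costs reproduce $\sum_{j=1}^{N}\|\mu^x_j\|_Q^2 + \sum_{j=1}^{N-1}\|\mu^u_j\|_R^2$, and substituting the Lyapunov identity $A_K^\tp P A_K - P = -(Q + K^\tp R K)$ of Assumption~\ref{ass:LyapCost} makes the terminal block $\|A_K\mu^x_N\|_P^2 - \|\mu^x_N\|_P^2 + \|\mu^x_N\|_Q^2 + \|K\mu^x_N\|_R^2$ collapse to zero. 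What survives is $-\|\mu^x_0\|_Q^2 - \|\mu^u_0\|_R^2$, and since $\mu^x_0 = z(k)+e(k) = x(k)$ and $\mu^u_0 = v^*_0(k) + Ke(k) = u(k)$ by~\eqref{eq:RHcontrolLaw}, this is exactly $-\|x(k)\|_Q^2 - \|u(k)\|_R^2$.

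The step I expect to require the most care is showing that the surviving noise traces collapse to precisely $\tr(P\Sigma^w)$. I would collect them as $\tr(P A_K^N\Sigma^w(A_K^N)^\tp) + \sum_{i=0}^{N-1}\tr((Q+K^\tp R K)A_K^i\Sigma^w(A_K^i)^\tp)$ and again use $Q+K^\tp R K = P - A_K^\tp P A_K$, so that each summand becomes $\tr((A_K^i)^\tp P A_K^i \Sigma^w) - \tr((A_K^{i+1})^\tp P A_K^{i+1}\Sigma^w)$; this second telescoping yields $\tr(P\Sigma^w) - \tr((A_K^N)^\tp P A_K^N\Sigma^w)$, whose negative part cancels the terminal trace, leaving $\tr(P\Sigma^w)$. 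Combining the deterministic and noise parts with the optimality inequality then gives the claim. The genuine obstacle is not any single calculation but the bookkeeping: one must keep the realized initial error $e(k+1)$ --- which carries the extra $w(k)$ created by the indirect feedback --- strictly separate from the prediction variances that cancel, and recognize that it is exactly this $w(k)$-dependence that telescopes into the residual $\tr(P\Sigma^w)$.
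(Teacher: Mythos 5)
Your proof is correct, and while it shares the standard skeleton with the paper's argument---the shifted candidate $\bar{V} = \{v_1,\ldots,v_{N-1},Kz_N\}$ from Theorem~\ref{thm:recFeas}, the optimality inequality, and Assumption~\ref{ass:LyapCost}---it handles the stochastic terms by a genuinely different route. The paper never splits the cost into means and variances: it observes that, since $w(k) \overset{d}{=} w_0$, the candidate's initial error satisfies $\bar{e}_0 = (A+BK)e(k)+w(k) \overset{d}{=} e_1$, so the entire candidate trajectory is equal in distribution to the shifted predicted one, $\bar{X} \overset{d}{=} \{x_1,\ldots,x_N,(A+BK)x_N+w_N\}$ and $\bar{U} \overset{d}{=} \{u_1,\ldots,u_{N-1},Kx_N\}$ (the same device as in Lemma~\ref{lm:predDist}). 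The expected cost difference then reduces to boundary terms, the Lyapunov identity is applied once to the random variable $x_N$, and the residual $\tr(P\Sigma^w)$ appears in a single step as $\E \Vert w_N \Vert_P^2$, using independence of $x_N$ and $w_N$ and zero mean. In your version the realized disturbance $w(k)$ stays at the head of the horizon and is propagated through every prediction stage, producing the trace sum $\sum_{i=0}^{N-1}\tr\left(((A+BK)^i)^\tp (Q+K^\tp R K)(A+BK)^i\Sigma^w\right)$ plus a terminal trace, which you collapse by a second telescoping application of the Lyapunov identity; this also requires your (correct) observation that the prediction variances depend only on the prediction index and cancel in the difference. Both arguments are sound: the paper's distributional-shift argument is shorter and keeps all randomness in one place, while yours is more elementary, stays entirely within the deterministic mean--variance reformulation that makes the MPC implementable, and makes explicit how the $w(k)$ injected by the indirect-feedback initialization accumulates into $\tr(P\Sigma^w)$. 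As a side note, the form $Q + K^\tp R K$ you use for the Lyapunov right-hand side is the dimensionally consistent one; the paper's $Q + KRK^\tp$ is a typo.
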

\begin{proof}
  See Appendix.
\end{proof}
Using the cost decrease in Theorem~\ref{thm:costDecrease} a standard argument 
leads to the following asymptotic cost bound.
\begin{corollary}[Average asymptotic cost bound]\label{cor:AsympConvergence} 
  \hfill \\
  Consider system~\eqref{eq:system} subject to i.i.d. 
  disturbances~\eqref{eq:iidNoise} under the control law~\eqref{eq:RHcontrolLaw}
  resulting from~\eqref{eq:MPC_2} with cost functions~\eqref{eq:quadraticCost}. 
  We have
\begin{align*}
 \lim_{\bar{N} \rightarrow \infty} \frac{1}{\bar{N}} \sum_{k=0}^{\bar{N}}
 \E \left( \Vert x(k) \Vert_Q^2 + \Vert u(k) \Vert^2_R \right)
 \leq \tr(P\Sigma^w).
\end{align*}
\end{corollary}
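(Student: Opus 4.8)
The plan is to convert the one-step conditional cost decrease of Theorem~\ref{thm:costDecrease} into the claimed time-averaged bound through a telescoping summation, which is the standard dissipativity-type argument for MPC performance estimates.

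First I would remove the conditioning. Theorem~\ref{thm:costDecrease} bounds a conditional expectation given $(x(k),z(k))$; applying the law of total expectation to both sides yields the unconditional inequality
\begin{align*}
  \E\!\left(J^*(x(k\!+\!1),z(k\!+\!1))\right) - \E\!\left(J^*(x(k),z(k))\right) \leq -\E\!\left(\Vert x(k)\Vert_Q^2 + \Vert u(k)\Vert_R^2\right) + \tr(P\Sigma^w)\,.
\end{align*}

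Next I would telescope. Summing over $k = 0,\ldots,\bar{N}$ collapses the left-hand side to $\E(J^*(x(\bar{N}\!+\!1),z(\bar{N}\!+\!1))) - J^*(x(0),z(0))$, where the term at $k=0$ is deterministic since $x(0)=z(0)$ is the given initial state. Rearranging,
\begin{align*}
  \sum_{k=0}^{\bar{N}} \E\!\left(\Vert x(k)\Vert_Q^2 + \Vert u(k)\Vert_R^2\right) \leq J^*(x(0),z(0)) - \E\!\left(J^*(x(\bar{N}\!+\!1),z(\bar{N}\!+\!1))\right) + (\bar{N}\!+\!1)\,\tr(P\Sigma^w)\,.
\end{align*}
Because $J^*$ is assembled from the quadratic stage and terminal costs with $Q \succeq 0$, $R \succ 0$, and $P \succeq 0$ (Assumption~\ref{ass:LyapCost}), it is nonnegative, so the terminal term $-\E(J^*(x(\bar{N}\!+\!1),z(\bar{N}\!+\!1)))$ may be dropped to preserve the upper bound.

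Finally I would divide by $\bar{N}$ and let $\bar{N}\to\infty$. The initial-cost term $J^*(x(0),z(0))/\bar{N}$ vanishes, the factor $(\bar{N}\!+\!1)/\bar{N}$ tends to one, and what remains on the right-hand side is exactly $\tr(P\Sigma^w)$, establishing the claim. The one point needing care is that each unconditional expectation is finite, so that the tower property and the telescoping are legitimate; this holds because the error covariances $\Sigma^x_i$ are uniformly bounded and the feasible nominal trajectory keeps the expected optimal cost finite along the closed loop. I expect this integrability bookkeeping, rather than the algebra, to be the only genuine obstacle.
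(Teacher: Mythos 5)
Your proposal is correct and follows essentially the same route as the paper's proof: telescoping the expected one-step cost decrease of Theorem~\ref{thm:costDecrease}, dropping the terminal term $\E(J^*(x(\bar{N}\!+\!1),z(\bar{N}\!+\!1)))$ via nonnegativity of $J^*$, and dividing by $\bar{N}$ before passing to the limit. The only cosmetic difference is that you work with unconditional expectations (valid since $x(0)=z(0)$ is deterministic), whereas the paper conditions throughout on $x(0)$ and therefore invokes the conditional independence of $\Delta J^*(k)$ from past states given $(x(k),z(k))$ when applying iterated expectations.
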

\begin{proof}
  See Appendix and~\cite{Cannon2009, Lorenzen2017, Hewing2018b} for similar 
  derivations.
\end{proof}
\begin{remark}
  Note that the SMPC formulation is therefore guaranteed to provide better or
  equal asymptotic average cost as under linear feedback gain $K$.
\end{remark}
%
\section{Simulation Examples}
We present two simulation studies to demonstrate the proposed approach. The 
first illustrative example
demonstrates feedback properties in a simple regulation task, while the second
demonstrates the approach for time-varying correlated disturbance sequences in a
building control setting.
\subsection{Double integrator}\label{subsec:doubleIntegrator}
To demonstrate the feedback properties on the nominal system trajectory 
through the state update scheme~\eqref{eq:updateRule}, we consider a simple
regulation problem for a double integrator under i.i.d.\ noise and quadratic 
cost with $Q = I$, $R = 1$, i.e.\ the setup considered in 
Section~\ref{subsec:Stability}, specifically 
\[ A = \begin{bmatrix} 1 & 1 \\ 0 & 1\end{bmatrix}\, , B = \begin{bmatrix}
\frac{1}{2} \\ 1 \end{bmatrix} \, , w(k) \sim \mathcal{N}\left(0,\begin{bmatrix}
\frac{1}{4} & \frac{1}{2} \\ \frac{1}{2} & 1 \end{bmatrix}  \right) \, .\]
 The system is subject to constraints on the absolute velocity
 \[ 
   \Pr(|[x(k)]_2| \leq 3) \geq 80 \% 
 \] 
 and we use the tube controller $K =  [-0.2, -0.6]^\tp$. 
For simplicity, we do not consider constraints on the inputs and make use of a 
constant tightening based on $\Sigma_\infty$ along Remark~\ref{rm:Riid} 
and~\ref{rm:RGauss}. Considering only the velocity state we get for all $k$ 
\[\set{R}_k \!=\! \set{R} \!=\! \setst{e}{[e]_2
 [\Sigma_\infty]_2^{-1} [e]_2 \leq \chi^2_1(p) } \!=\! \setst{e}{|[e]_2| \leq
1.53 }\]
 We choose terminal set $\set{Z}_f = \{[0,0]^\tp\}$ and consider a prediction
 horizon of $N=30$.
We compare the approach, which we call \emph{PRS-rec}, to the following two 
variants.
\begin{description}
  \item[\normalfont \emph{PRS-nom}:] Cost~\eqref{eq:MPC_2_cost} is computed for 
  the nominal states $z_i$, resulting in no feedback from $x(k)$ on the nominal 
  system trajectory.
  \item[\normalfont \emph{PRS-df}:] Approach with \emph{direct feedback}, in
  which $z_0(k) = x(k)$ whenever feasible and $z_0(k) = z_1(k)$ otherwise, as
  previously presented in~\cite{Hewing2018b}.
\end{description}
%
\subsubsection{Nominal Results}
%
We simulate the system $N_s = 1000$ times starting from initial condition 
$x(0) = [10, 0]^\tp$ 
over a horizon of $\bar{N} = 100$ time steps. We compare the  
resulting average closed-loop cost 
\[J_{cl}(x(0)) = \frac{1}{\bar{N}}\sum_{k=0}^
{\bar{N}} \Vert x(k) \Vert_Q^2
+ \Vert u(k) \Vert_R^2\, ,\] 
again averaged over all trials, as well as the maximum
empirical constraint violation rate in a time step 
$\max \bar{n}_v(k)$ with
\[
  \bar{n}_v(k) = \frac{\text{\# trajectories violating constraints at }k}{N_s} 
  \, .
\]
Additionally we report the
closed-loop cost $\bar{J}_{cl}(x(20))$ starting at time step $20$ to evaluate
the long term behavior without initial transient regularization. The 
results are given in Table~\ref{tb:feedbackResult}.
All approaches have guaranteed closed-loop chance constraint satisfaction and 
it is evident 
that also empirically all constraints are (conservatively) fulfilled. 
By comparing \emph{PRS-rec} and \emph{PRS-df} to \emph{PRS-nom} it is evident
that feedback on the nominal trajectory can significantly reduce the closed-loop
cost, particularly $J_{cl}(x(20))$, which Corollary~\ref{cor:AsympConvergence}
asymptotically guarantees to be smaller than $\tr(P \Sigma^w) = 7.12$, i.e.\ the
cost under only the linear control law $K$. 
The example therefore exemplifies how without feedback on the nominal trajectory
(\emph{PRS-nom}) the controller degenerates to a linear control law over time.
The considered novel form of feedback (\emph{PRS-rec}) achieves similar cost to
the direct feedback form (\emph{PRS-df}), which, however, requires stronger
assumptions for strict satisfaction of chance constraints, namely i.i.d.
unimodal disturbance distributions and symmetric reachable
sets~\cite{Hewing2018b}.

\begin{table}[h]
  \center
\caption{Comparison for Double Integrator}\label{tb:feedbackResult}
\begin{tabular}{cccc} \toprule
  Controller	& \emph{nom} & \emph{rec} & \emph{df} \\  \midrule
  $\bar{J}_{cl}(x(0))$	& $10.41$ & $7.76$ & $7.71$ \\
  $\bar{J}_{cl}(x(20))$	& $7.2$ & $4.33$ & $4.28$ \\
  $\max \bar{n}_v(k)$	& $10.3\%$ & $8.00 \%$ & $6.9 \%$
\end{tabular}
\end{table}
%
\subsubsection{Model Mismatch and Soft Constraints}
%
As mentioned in Remark~\ref{rm:softConstraints}, in the presented approach 
feedback only acts through the cost function which can have adverse effects on 
chance constraint satisfaction in cases of model mismatch, as is often the case 
in practical applications.
We investigate this case by reducing the input matrix in simulation to $B_{\text
  {sim}} = \frac{1}{5}B$, such that the controllers are designed w.r.t. 
  misspecified actuator gain. We repeat the regulation experiments for the 
  aforementioned controllers, as well as a soft constrained variant
\begin{description}
  \item[\normalfont \emph{PRS-recSC}:] Includes an additional cost for 
  constraint violation 
  in the \emph{predictive} distribution. Specifically, we penalize the 
  predicted expected value of the state $\mu^x_i$ if it lies outside the 
  tightened constraint set $\set{Z}_i$, i.e. such that $\mu^x_i + s_i \in \set
  {Z}_i$. The cost on the slack variable is $c |s_i|$ and chosen such that the 
  soft constraint is always satisfied, if feasible~\cite{Kerrigan2000a}.
\end{description}
Note that the theoretical results w.r.t. constraint satisfaction for the 
nominal system model similarly hold for \emph{PRS-recSC}.

The results of this simulation are summarized in Table~\ref
{tb:feedbackResult_MM}. With respect to the incurred cost, the results from 
this simulation are qualitatively similar to the nominal case without modeling 
error. \emph{PRS-nom} incurs the highest cost, since no feedback acts on the 
nominal trajectory and the controller degenerates to the linear feedback law 
$Kx$. Additionally it is unable to achieve the specified level of constraint 
satisfaction due to the large model mismatch. The recursively feasible approach 
without soft constraints \emph{PRS-rec} displays the lowest achieved cost, but 
does incur similar constraint violations as \emph{PRS-nom}. For the direct 
feedback controller \emph{PRS-df} this is not the case, since the direct 
feedback also acts w.r.t. to the constraints. This does, however, come at 
significantly higher cost. Finally, the soft constrained recursively feasible 
controller \emph{PRS-recSC} results in similar chance constraint satisfaction, 
as well as good performance in terms of cost, even in this case of significant 
model mismatch. 
\begin{table}[h]
  \center
\caption{Comparison for Double Integrator with Model Mismatch}\label{tb:feedbackResult_MM}
\begin{tabular}{ccccc} \toprule
  Controller	& \emph{nom} & \emph{rec} & \emph{df} & \emph{recSC} \\  \midrule
  $\bar{J}_{cl}(x(0))$	& $130.10$ & $91.22$ & $145.68$ & $116.87$\\
  $\bar{J}_{cl}(x(20))$	& $135.79$ & $93.24$ & $156.03$ & $120.42$\\
  $\max_k \bar{n}_v(k)$	& $21.70\%$ & $19.8 \%$ & $11.0\%$ & $11.3 \%$ \\
\end{tabular}
\end{table}

In the following example we demonstrate the flexibility of the presented method
by considering a building control problem, subject to non i.i.d.\ disturbance
distributions, for which the strong guarantees in terms of closed-loop chance
constraint satisfaction similarly hold.
%
\subsection{Building Control}
%
Consider the simple building temperature control example shown
in Figure~\ref{fg:buildingModel}, consisting of four rooms with
individual temperatures (states) $x(k) = [T^1(k), T^2(k), T^3(k), T^4(k)]^\top$,
combined heating/cooling units for each room (inputs) $u(k)=[u^1(k), u^2(k), u^3
(k), u^4(k)]^\top$, and uncertain outside temperature (disturbances), 
represented by $w(k)$.
The system dynamics are modeled using a resistance network~(see 
e.g.~\cite{Gyalistras2009}),
in which each room is characterized by a thermal capacity, while the
interactions are governed by the thermal conductance, corresponding to the
resistances. 

\begin{figure}[h]
  \center
\includegraphics[scale = 0.75]{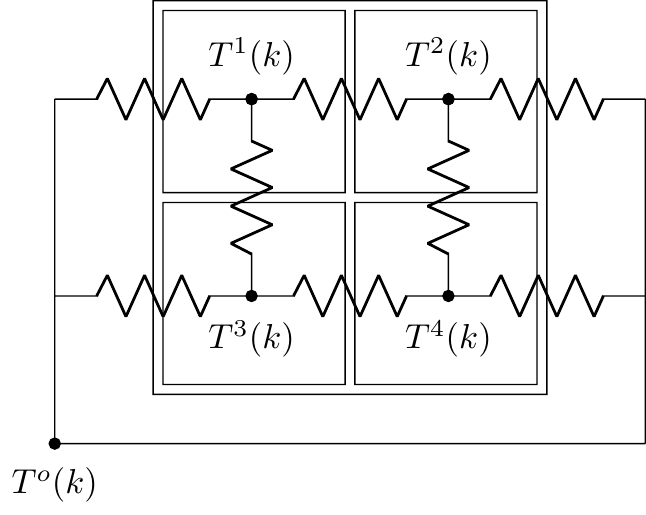}
\caption{Illustration of building model in form of a
resistance network, with states $T^1$ to $T^4$ and the outside temperature $T^o$ as disturbance.}\label{fg:buildingModel}
\end{figure}

The resulting system description is given by
\begin{align*}
  x(k\!+\!1) &= A x(k) + B u(k) + B_d \bar T^o + w(k) \, ,
\end{align*}
in which $A$ captures the thermal conductances between the rooms,
$B$ represents the thermal inertia with respect to heating/cooling, $B_d$ is the
effect of the outside temperature on each room, $\bar T^o$
is given by the overall average outside temperature, which is included
into the nominal system for an intuitive presentation
of the results\footnote{Note that $\bar{T}^o$ can be similarly included in the 
disturbance sequence $W$, yielding equivalent results. A zero mean 
representation is used for increased interpretability.}, and $w(k) = B_d (T^o(k)
 - \bar T^o)$ is the uncertain deviation
from the mean outside temperature, see Appendix for
parameters and matrices.
We consider a normally distributed disturbance sequence $W =
	[w(0),\ldots,w(\bar{N\!}-\!1)] \sim \mathcal(\mu^W,\Sigma^W)$ which is highly
correlated in time, i.e. the current outside temperature significantly 
influences the temperature in the following hours. 
For illustration, Figure~\ref{fg:building_simulation} (top) displays samples 
from the conditional distributions $W_k$ used in prediction.

We consider regulation with respect to
$T^i_r = \SI{21.75}{\celsius}$ in each room, and a quadratic cost on
the temperature deviation and a cost on the absolute value of each
input, corresponding to an economic cost for heating or cooling, i.e. $l(x,u) =
  \Vert x - T_r \Vert^2_Q + \Vert u \Vert_1$  with $Q = 550I$.
We choose the tube controller feedback $K$ using an LQR design
with weight matrices $Q_{LQR} = 10^5 I,~R_{LQR}=0.03I$
and consider for simplicity the desired set point as terminal set for each 
state, i.e. $\set{Z}_f =
\{21.75 \!\cdot\! \mathbf{1}\}$, where $\mathbf{1}$ denotes the one vector. 
Comfort constraints
on the room temperature and physical input
constraints on the heating/cooling power are given by
\begin{align*}
  \Pr(20 \!\cdot\! \mathbf{1} \leq x(k) \leq 23.5 \!\cdot\! \mathbf{1}) &\geq 
  0.90 \, ,\\
  \Pr(-6\cdot10^3 \!\cdot\! \mathbf{1} \leq u(k) \leq 6\cdot10^3 \cdot \mathbf
  {1}) &\geq 0.99 \, .
\end{align*}
Simulating the system with initial condition $x(0)=22.5\!\cdot\! \mathbf{1}$
yields the closed-loop behavior for room 1 shown in
Figure~\ref{fg:building_simulation}.
Because of the economic cost on the heating power, the applied input $u^1(k)$ is
close to zero for the first 10 hours until chilly outside temperatures
during night cool the rooms below the desired set point. At $k=30$ hours,
the reduced room temperature before nighttime
requires an increased heating effort during the predicted future time steps.

Note the time-varying constraint tightening on the nominal system 
state $z$, which results from considering time-varying correlated disturbance 
sequence. This helps to significantly reduce conservatism of the MPC controller 
compared  to time-invariant disturbance bounds. The approach is able to 
regulate the room temperature close to the reference point while satisfying all 
comfort and input constraints.
The example therefore illustrates the potential applicability of the presented 
approach to a number of interesting engineering applications, while providing 
theoretical guarantees on chance constraint satisfaction.
\begin{figure}
	\centering
  \input{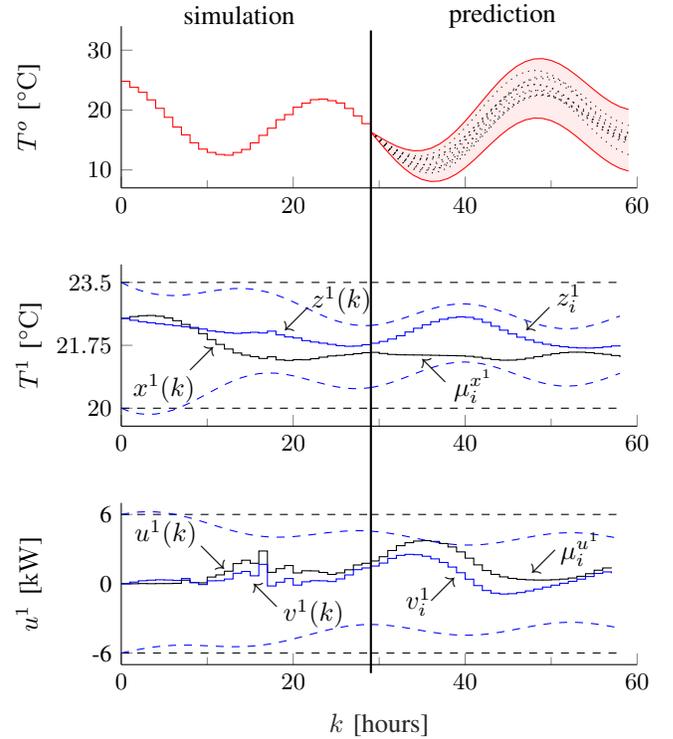}
  \caption{Application of the proposed stochastic MPC scheme
  to the building control example, illustrated in
  Figure~\ref{fg:building_simulation}. \emph{Top:} Measured outside
  temperature $T_0$ until $k=29$ hours and its prediction, conditioned
  on the past observations. \emph{Middle/Bottom:}
  Solid lines represent the closed-loop simulation from $k=0$ to $k=29$ hours 
  and the
  solution of \eqref{eq:MPC_2} at $k=29$, taking into account
  the expected temperature prediction, conditioned on past observations.
  Dashed lines represent
  state and input constraints (black) as well as tightened
  nominal state and input constraints (blue).}
  \label{fg:building_simulation}
\end{figure}
\section{Conclusion}\label{sec:conclusions}
We presented a stochastic model predictive control approach for additive 
correlated disturbance sequences. Using the concept of probabilistic reachable 
sets for constraint tightening, as well as a novel initialization scheme of the 
MPC, we were able to show chance constraint satisfaction for the closed-loop 
system. We demonstrated in simulation that this novel update scheme achieves 
similar or better performance to established forms of feedback in MPC, while 
facilitating theoretical analysis. Finally, we demonstrated  the presented 
framework for a building control task, highlighting the applicability of the 
approach to a broad class of interesting engineering applications.
%
\section*{Acknowledgments}
The authors would like to thank A. Bollinger, C. Gaehler, J. Lygeros, and
R. Smith for providing the building control simulation model.
%
\appendices
\section*{Appendix}
\begin{proof}[Proof of Lemma~\ref{lm:predDist}]~

  Since $z(0) = x(0)$ we have $e(0) = 0$ and due to~\eqref{eq:CLerror} that 
  $e(k) = \sum_{j=0}^{k-1} (A+BK)^{k-1-j} w(j)$. For the predicted errors we 
  similarly
  have with $e_0(k) = e(k)$ that
  \begin{align*}
    &e_{i}(k) = (A+BK)^{i} e(k) + \sum_{m=0}^{i-1} (A+BK)^{i-1-m} w_{m}(k) \\
           &= \! \sum_{j=0}^{k-1} (A\!+\!BK)^{i+k-1-j} w(j) + \!\sum_{m=0}^{i-1} (A\!
           +\!BK)^{i-1-m} w_{m}(k) \, .
  \end{align*}
  Since $[w(0)^\tp,\ldots,w(k\!+\!N)^\tp]^\tp$ has the same distribution as 
  $[w(0)^\tp,\ldots,w(k-1)^\tp, w_0(k)^\tp,\ldots,w_N(k)^\tp]^\tp$ 
  this is equal in distribution to
  \begin{align*}
  &\sum_{j=0}^{k-1} (A\!+\!BK)^{i+k-1-j} w(j) + \! \sum_{m=k}^{i+k-1} (A\!+\!BK)^{i+k-1-m} w(m) \\
  &= \sum_{j=0}^{i+k-1}(A+BK)^{i+k-1} w(j) = e(i+k) \, .
  \end{align*}
\end{proof}
\begin{proof}[Proof of Theorem~\ref{thm:costDecrease}]~

  For notational convenience we will omit the conditioning on $x(k)$, $z(k)$. 
  Let $J(x,z,V)$ be the cost of optimization problem~\eqref{eq:MPC_2} under 
  input sequence $V$. We have 
 \begin{align*}
  \E (J^*(x(k\!+\!1),z(k\!+\!1))) \leq \E (J(x(k\!+\!1),z(k\!+\!1), \bar{V}))
 \end{align*}
 where $\bar{V} = \{\bar{v}_0, \ldots, \bar{v}_{N-1}\} = \{v_1, \ldots, v_{N-1},
  K z_N \}$ is a
candidate solution with resulting nominal state trajectory
$ \bar{Z} = \{\bar{z}_0, \ldots, \bar{z}_{N}\} = \{z_1, \ldots, z_{N},(A+ BK) 
z_N \}$, see also proof of Theorem~\ref{thm:recFeas}.
Note that since $w(k)$ has the same distribution as $w_0$, we have for the 
resulting initial error state
\begin{align*}
  \bar{e}_0 = x(k\!+\!1) - z_1 &=  (A+BK)e(k) + w(k) \\ &\overset{d}{=} (A+BK)
  e_0 + w_0 = e_1 \, ,
\end{align*}
such that
\[ \bar{E} = \{\bar{e}_0, \ldots, \bar{e}_{N}\} \overset{d}{=} \{e_1, \ldots, e_
{N}, (A+BK) e_N + w_N \} \, ,\]
The candidate state sequence $\bar{x}_i = \bar{z}_i + \bar{e}_i$ and inputs 
$\bar{u}_i = K\bar{e}_i + \bar{v}_i$ therefore result in
\begin{align*}
  \bar{X} = \{\bar{x}_0 \ldots,
  \bar{x}_{N}\}\overset{d}{=} &\{x_1 \ldots,
x_{N}, (A\!+\!BK)x_{N} + w_N\} \, ,\\ 
\bar{U} = \{\bar{u}_0 \ldots,
\bar{u}_{N}\}\overset{d}{=} &\{u_1, \ldots,
u_{N-1}, K x_{N} \} \, ,\end{align*}
where the last element of $\bar{U}$ follows from $\bar{u}_{N-1} = K\bar{e}_{N-1}
- \bar{v}_{N-1} \overset{d}{=} Ke_N + Kz_N = Kx_N$.

We therefore have
\begin{align*}
  &\E \left(J^*(x(k\!+\!1),z(k\!+\!1)) - J^*(x(k),z(k)) \right) \\
  &\leq \E \left(J(x(k\!+\!1),z(k\!+\!1),\bar{V}) - J^*(x(k),z(k)) \right) = \\
  & = \E \! \Bigg( \Vert \bar{x}_N \Vert^2_P  + \sum_{i=0}^{N-1} \Vert \bar{x}_
  {i}\Vert^2_Q + \Vert \bar{u}_{i} \Vert^2_R 
  \\&- \bigg( \Vert x_N \Vert^2_P  + \sum_{i=0}^{N-1} \Vert x_{i}\Vert^2_Q + 
  \Vert u_{i} \Vert^2_R \bigg) \Bigg)\\
  &= \E \big( \Vert (A\!+\!BK) x_N + w_N \Vert^2_P - \Vert x_N \Vert^2_P  \\
  &+ \Vert x_{N}\Vert^2_Q + \Vert Kx_{N} \Vert^2_R - \Vert x_0 \Vert^2_Q - 
  \Vert u_{0} \Vert^2_R \big) \\
  &= \E \big( \Vert (A\!+\!BK) x_N \Vert^2_P + \Vert w_N \Vert^2_P - \Vert x_N 
  \Vert^2_P \\
  &+ \Vert x_{N}\Vert^2_Q + \Vert u_{N} \Vert^2_R - \Vert x_0 \Vert^2_Q - \Vert 
  u_{0} \Vert^2_R \big) \, ,
\end{align*}
where the last equation follows from the fact that $x_N$ and $w_N$ are 
independent and $w_N$ is zero mean. 
Due to Assumption~\ref{ass:LyapCost} this further simplifies to
\begin{align*}
  &\E \big( \Vert (A\!+\!BK) x_N \Vert^2_P + \Vert w_N \Vert^2_P - \Vert x_N 
  \Vert^2_P \\
  &+ \Vert x_{N}\Vert^2_Q + \Vert u_{N} \Vert^2_R - \Vert x_0 \Vert^2_Q - \Vert 
  u_{0} \Vert^2_R \big) \\
  &= \E \big( \Vert x_N \Vert^2_{(A\!+\!BK)^\tp P (A\!+\!BK) - P + Q + K R 
  K^\tp} \\
  &+ \Vert w_N \Vert^2_P - \Vert x_0 \Vert^2_Q - \Vert u_{0} \Vert^2_R \big) \\
  & = \E \big(\Vert w_N \Vert^2_P - \Vert x_0 \Vert^2_Q - \Vert u_{0} \Vert^2_R 
  \big) \\
  & = \tr(P \Sigma^w) - \Vert x(k) \Vert^2_Q - \Vert u(k) \Vert^2_R  \, .
\end{align*}
\end{proof}
\begin{proof}[Proof of Corollary~\ref{cor:AsympConvergence}]~

  Let $\Delta J^*(k) = J^*(x(k\!+\!1),z(k\!+\!1)) - J^*(x(k),z(k))$. By the law
  of iterated expectations, we have
  \begin{align*}
    &\E \left( J^*(x(\bar{N}\!+\!1),z(\bar{N}\!+\!1)) - J^*(x(0),z(0)) \middle |
     x(0) \right) \\
    &= \E \left( \sum_{k=0}^{\bar{N}} \Delta J^*(k) \middle| x(0) \right) \\ 
    &=\sum_{k=0}^{\bar{N}} \E \Big( \E \big( \Delta J^*(k) \big| x(k),z(k)\big) 
    \Big| x(0)\Big) \, ,
  \end{align*}
  since $\Delta J^*(k)$ is independent of previous $x$ and $z$ given
  $x(k),z(k)$. Using the cost decrease of Theorem~\ref{thm:costDecrease} this 
  means 
  \begin{align*}
    &\E \left( J^*(x(\bar{N}\!+\!1),z(\bar{N}\!+\!1)) - J^*(x(0),z(0)) \middle |
     x(0) \right) \\
    &\leq \sum_{k=0}^{\bar{N}} \E \left(-\Vert x(k) \Vert_Q^2 -\Vert u(k) 
    \Vert_R^2 + \tr(P\Sigma^w) \middle| x(0)\right)
  \end{align*}
  From this it follows that the asymptotic average is bounded from below by 
  zero (since $J(x(0),z(0))$ is finite) and from above by
  \begin{align*}
    &0 \leq \lim_{\bar{N} \rightarrow \infty} \frac{1}{\bar{N}} \E \left(J(x
    (\bar{N}\!+\!1),z(\bar{N}\!+\!1)) - J(x(0),z(0)) \middle| x(0) \right) \\
  &\!\leq \lim_{\bar{N} \rightarrow \infty} \frac{1}{\bar{N}}
    \sum_{k=0}^{\bar{N}} \E \! \left(
  -\Vert x(k) \Vert_Q^2 -\Vert u(k) \Vert_R^2 + \tr(P\Sigma^w) \middle| x(0) 
  \right) \\
  &= \tr(P\Sigma^w) + \lim_{t \rightarrow \infty} \frac{1}{t} \sum_{k=0}^{\bar
  {N}} \E \left( -\Vert x(k) \Vert_Q^2 -\Vert u(k) \Vert_R^2 \middle| x(0)
  \right)
\end{align*}
and the claim follows.
\end{proof}

\subsection*{Building control example}
The dynamics are discretized using Euler forward with
step size $\Delta t = 3600~ [s]$ which yields
\begin{align}
  A = I + \Delta t C^{-1}(H - D),~B=C^{-1},~B_d = C^{-1}h
\end{align}
with $C,D,H\in\mathbb R^{n\times n}$, where
\begin{align*}
  D_{ij} =
  \begin{cases}
    (H \mathbf{1})_i,~ i=j \\
    0, \text{ else}
  \end{cases},~
  C_{ij} =
  \begin{cases}
    C_{ii},~ i=j \\
    0, \text{ else}
  \end{cases},
\end{align*}
and $H_{ij}$ $[W/K]$, $h\in\mathbb R^n$ $[W/K]$ are given by
\begin{align*}
  H = 1000 \cdot
  \begin{bmatrix}
    0 & 2.1 & 2 & 0 \\
    2.1 & 0 & 0 & 1.9 \\
    2 & 0 & 0 & 1 \\
    0 & 1.9 & 1 & 0
  \end{bmatrix},~
  h =1000 \cdot
  \begin{bmatrix}
    0.3 \\ 0.5 \\ 0.4 \\ 0.6
  \end{bmatrix},
\end{align*}
and $\diag(C) = 10^6\cdot [50, 110, 80, 90] ~ [J/K]$. 



\bibliographystyle{IEEEtran}
\bibliography{Bibliography.bib}
\end{document}